\begin{document}
\title{Proactive Eavesdropping via Jamming for Rate Maximization over Rayleigh Fading Channels
\author{Jie Xu,~\IEEEmembership{Member,~IEEE}, Lingjie Duan,~\IEEEmembership{Member,~IEEE}, and Rui Zhang,~\IEEEmembership{Senior~Member,~IEEE}}
%\thanks{Copyright (c) 2015 IEEE. Personal use of this material is permitted. However, permission to use this material for any other purposes must be obtained from the IEEE by sending a request to pubs-permissions@ieee.org.}
%\thanks{Manuscript received October 17, 2015; accepted October 29, 2015. The associate editor coordinating the review of this paper and approving it for publication was Prof. Il-Min Kim.}
%$^1$Engineering Systems and Design Pillar, Singapore University of Technology and Design\\
%$^2$Department of Electrical and Computer Engineering, National University of Singapore\\
%E-mail:~jiexu.ustc@gmail.com,~lingjie\_duan@sutd.edu.sg,~elezhang@nus.edu.sg}
\thanks{J. Xu and L. Duan are with the Engineering Systems and Design Pillar, Singapore University of Technology and Design (e-mail:~jiexu.ustc@gmail.com,~lingjie\_duan@sutd.edu.sg).}
\thanks{R. Zhang is with the Department of Electrical and Computer Engineering, National University of Singapore (e-mail: elezhang@nus.edu.sg).}}

\maketitle

\begin{abstract}
Instead of against eavesdropping, this letter proposes a new paradigm in wireless security by studying how a legitimate monitor (e.g., a government agency) efficiently eavesdrops a suspicious wireless communication link. The suspicious transmitter controls its communication rate over Rayleigh fading channels to maintain a target outage probability at the receiver, and the legitimate monitor can successfully eavesdrop only when its achievable rate is no smaller than the suspicious communication rate. We propose a proactive eavesdropping via jamming approach to maximize the average eavesdropping rate, where the legitimate monitor sends jamming signals with optimized power control to moderate the suspicious communication rate.
\end{abstract}
%%%
\begin{keywords}
Legitimate surveillance, proactive eavesdropping, jamming power control, full-duplex.%\vspace{-1em}
\end{keywords}

\newtheorem{definition}{\underline{Definition}}[section]
\newtheorem{fact}{Fact}
\newtheorem{assumption}{Assumption}
\newtheorem{theorem}{\underline{Theorem}}[section]
\newtheorem{lemma}{\underline{Lemma}}[section]
\newtheorem{corollary}{\underline{Corollary}}[section]
\newtheorem{proposition}{\underline{Proposition}}[section]
\newtheorem{example}{\underline{Example}}[section]
\newtheorem{remark}{\underline{Remark}}[section]
\newtheorem{algorithm}{\underline{Algorithm}}[section]
\newcommand{\mv}[1]{\mbox{\boldmath{$ #1 $}}}

\section{Introduction}

\IEEEPARstart{W}{ireless} security has attracted a lot of research attentions recently, and there are extensive works investigating defense mechanisms, such as physical-layer security techniques, to secure wireless communications against eavesdropping (see e.g. \cite{ZouWangHanzo2015} and the references therein). However, these existing works usually view eavesdropping as illegitimate attacks, and have limitations from a broader national security perspective. Recently, with technical advancements in e.g. smartphones and drones, criminals or terrorists can potentially use them to establish wireless communication links for committing crimes and terrorism. Therefore, there is a growing need for government agencies (e.g., National Security Agency in the US) to monitor and legitimately eavesdrop these suspicious wireless communications (see, e.g., the Terrorist Surveillance Program \cite{TerroristWiKi}).

In this letter, we propose a paradigm shift in wireless security from preventing conventional eavesdropping attacks to a new legitimate surveillance objective. In particular, we consider a simple setup as shown in Fig. \ref{fig:0}, where a legitimate monitor aims to eavesdrop a point-to-point suspicious wireless communication link over Rayleigh fading channels. We consider that the suspicious transmission is a conventional communication link without using advanced physical-layer security techniques,{\footnote{This consideration is practical since it is difficult for the suspicious transmitter to know the existence of the eavesdropper and obtain the channel state information (CSI) of the eavesdropping channel.}} and the suspicious transmitter controls its communication rate to maintain a target outage probability at the receiver. In this case, the legitimate monitor can successfully decode and eavesdrop the suspicious communication link (in eavesdropping non-outage) only when its  achievable data rate is no smaller than the suspicious communication rate. In practice, the legitimate eavesdropping is very challenging, since the legitimate monitor may be far away from the suspicious transmitter and wireless channels may fluctuate significantly over time.

\begin{figure}
\centering
 \epsfxsize=1\linewidth
    \includegraphics[width=7.5cm]{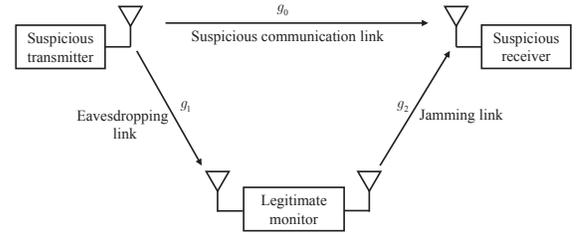}
\caption{A wireless legitimate surveillance scenario, where a full-duplex legitimate monitor aims to eavesdrop a point-to-point suspicious communication link via jamming.} \label{fig:0}\vspace{-1.5em}
\end{figure}

To overcome this issue, we propose an approach named proactive eavesdropping via jamming, where the legitimate monitor, operating in a full-duplex mode, sends jamming signals to moderate the suspicious communication rate for facilitating the simultaneous eavesdropping. In particular, we optimize the jamming power at the legitimate monitor to maximize its average eavesdropping rate, which is defined as the suspicious communication rate multiplied by the eavesdropping non-outage probability. We obtain the optimal jamming power solution in closed-form. Numerical results show that thanks to the optimized power control, the proposed proactive eavesdropping via jamming outperforms the conventional passive eavesdropping without jamming and the proactive eavesdropping with constant-power jamming.

Note that there have been a handful of existing works investigating malicious active eavesdropping attacks in the wireless physical layer security literature, where the attacker can jam and eavesdrop in a half-duplex \cite{AmariucaiWei2012,BasciftciGungorKoksalOzguner2015,MukherjeeSwindlehurst2013} or full-duplex mode \cite{MukherjeeSwindlehurst2011,ZhouMahamHjorungnes2011,KapetanovicZhengRusek2015,XiongLiangLiGong2015}. However, this line of research aimed to design defense methods against {\it illegitimate} active eavesdropping attacks, while in this letter we utilize the {\it legitimate} proactive eavesdropping for the purpose of surveillance, by leveraging jamming in a full-duplex mode.%\vspace{-1em}

\section{System Model and Problem Formulation}

As shown in Fig. \ref{fig:0}, we consider a legitimate surveillance scenario, where a full-duplex legitimate monitor aims to eavesdrop a point-to-point suspicious communication link via jamming. The suspicious transmitter and receiver are each equipped with a single antenna, and the legitimate monitor is equipped with two antennas, one for eavesdropping (receiving) and the other for jamming (transmitting). We consider a block-fading frequency-nonselective channel, where the wireless channels remain constant over each transmission block. Over any time block, let $g_0$, $g_1$, and $g_2$ denote the channel power gains from the suspicious transmitter to the suspicious receiver, from the suspicious transmitter to the eavesdropping antenna of the legitimate monitor, and from the jamming antenna of the legitimate monitor to the suspicious receiver, respectively. We consider Rayleigh fading for wireless channels, and thus $g_0$, $g_1$, and $g_2$ are modeled as independent exponentially distributed random variables with parameters $\lambda_0$, $\lambda_1$, and $\lambda_2$, respectively, i.e., $g_i\sim \mathrm{Exp}(\lambda_i)$ with the mean being $1/\lambda_i, i\in\{0,1,2\}$. It is assumed that the legitimate monitor does not know the CSI of $g_0$ or $g_2$ at each time block, while it knows their channel distribution information (CDI). Furthermore, we assume that the channel $g_1$ is perfectly known at the legitimate monitor in each time block.

We consider that the suspicious transmitter transmits with a constant power $P$, while the legitimate monitor employs a jamming power $Q$ to interfere with the suspicious receiver to facilitate the simultaneous eavesdropping. Accordingly, the received signal-to-interference-plus-noise ratio (SINR) at the suspicious receiver and the received signal-to-noise ratio (SNR) at the legitimate monitor are denoted as $\gamma_0 = \frac{g_0P}{g_2Q + \sigma_0^2}$ and $\gamma_1 = \frac{g_1P}{\sigma_1^2}$, where $\sigma_0^2$ and $\sigma_1^2$ denote the noise powers at the suspicious receiver and the legitimate monitor, respectively. Here, the self-interference from the jamming antenna to the eavesdropping antenna at the legitimate monitor is assumed to be perfectly cancelled by using advanced analog and digital self-interference cancellation methods. Assuming that the information signal transmitted by the suspicious transmitter and the jamming signal sent by the legitimate monitor are both Gaussian distributed \cite{KashyapBasarSrikant2004}, the achievable data rates of the suspicious link and the eavesdropping link (in bps/Hz) are respectively expressed as
\begin{align}
r_0 = & \log_2\left(1+\frac{g_0P}{g_2Q + \sigma_0^2}\right), \\
r_1 = & \log_2\left(1+\frac{g_1P}{\sigma_1^2}\right).
\end{align}

Since the suspicious transmitter is not aware of the instantaneous CSI of the suspicious link, it employs a fixed transmission rate, denoted by $R$, over all transmission blocks. In each time block, if the achievable data rate $r_0$ (or $r_1$) is greater than or equal to $R$, then the suspicious receiver (or the legitimate monitor) can correctly decode the data sent by the suspicious transmitter. Otherwise, it cannot decode correctly, and declares an outage. Accordingly, the decoding and eavesdropping outage probabilities at the suspicious receiver and the legitimate monitor are respectively denoted as
\begin{align}
p^{\rm out}_0 &= \mathbb{P}\left(r_0 < R\right), \\
p^{\rm out}_1 &= \mathbb{P}\left(r_1 < R\right). \label{eqn:p1}
\end{align}
As a result, we define the average eavesdropping rate at the legitimate monitor as the suspicious communication rate $R$ multiplied by the non-outage probability $1-p^{\rm out}_1$, i.e., $R^{\rm avg} \triangleq R(1-p^{\rm out}_1)$, which denotes the average rate correctly decoded (eavesdropped) over a long time.

In practice, the suspicious transmitter adjusts its communication rate $R$ to maintain the decoding outage probability $p^{\rm out}_0$ at the suspicious receiver to be fixed at a certain level, i.e., $p^{\rm out}_0 = \delta$, with $\delta > 0$ denoting the target outage probability. By exploiting such a property, the legitimate monitor can adjust its jamming power $Q$ to lessen the achievable data rate $r_0$ of the suspicious link, thus reducing the suspicious communication rate $R$ to keep $p^{\rm out}_0 = \delta$. In particular, we aim to optimize the jamming power $Q$ at the legitimate monitor to maximize its average eavesdropping rate $R^{\rm avg}$, for which the optimization problem is formulated as
\begin{align}
{\rm (P1)}:\max_{Q,R} ~&R(1-p^{\rm out}_1)\nonumber\\
\mathrm{s.t.}~&p^{\rm out}_0 = \delta,\label{eqn:con1}\\
&0\le Q\le Q_{\max},\label{eqn:con2}
\end{align}
where $Q_{\max} > 0$ denotes the maximally allowable jamming power of the legitimate monitor. Note that in problem (P1) the suspicious communication rate $R$ is an auxiliary optimization variable, which changes as a function of $Q$ according to (\ref{eqn:con1}) (as will be shown later). In general, there is a trade-off in determining the optimal jamming power $Q$ to maximize $R^{\rm avg}$ in (P1): a larger $Q$ can reduce $R$ more significantly, which in turn helps improve the non-outage probability $1-p_1^{\rm out}$ at the legitimate monitor. This motivates us to optimize $Q$ to maximize $R^{\rm avg}$.%\vspace{-1em}

\begin{figure*}
\begin{minipage}[t]{0.325\linewidth}
\centering
\includegraphics[width=\textwidth]{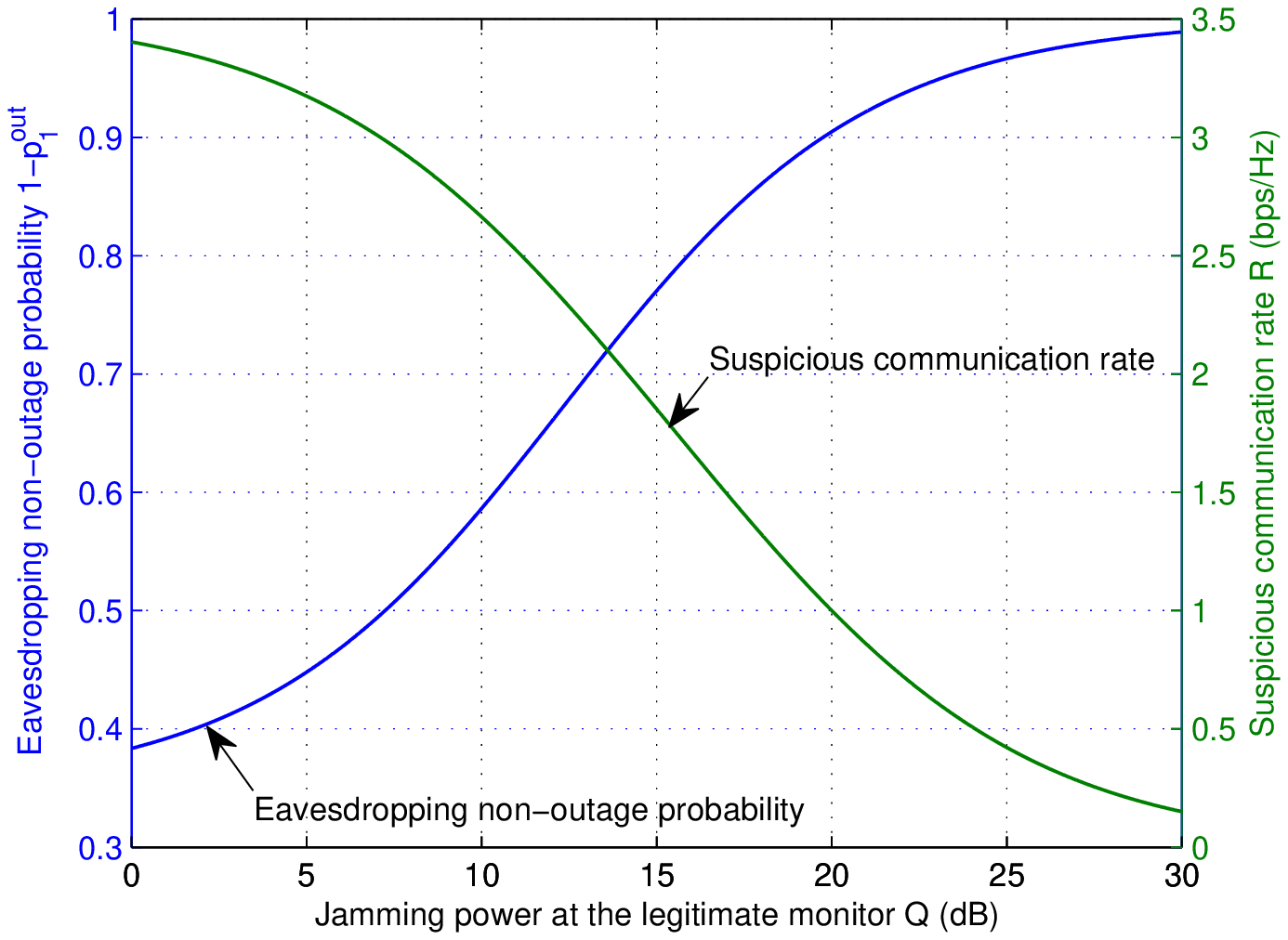}\label{fig:2}
\caption{The eavesdropping non-outage probability $1-p_1^{\rm out}$ and the suspicious communication rate $R$ versus the jamming power at the legitimate monitor $Q$.}\vspace{-0.5em}
\end{minipage}
\hfill
\begin{minipage}[t]{0.325\linewidth}
\centering
\includegraphics[width=\textwidth]{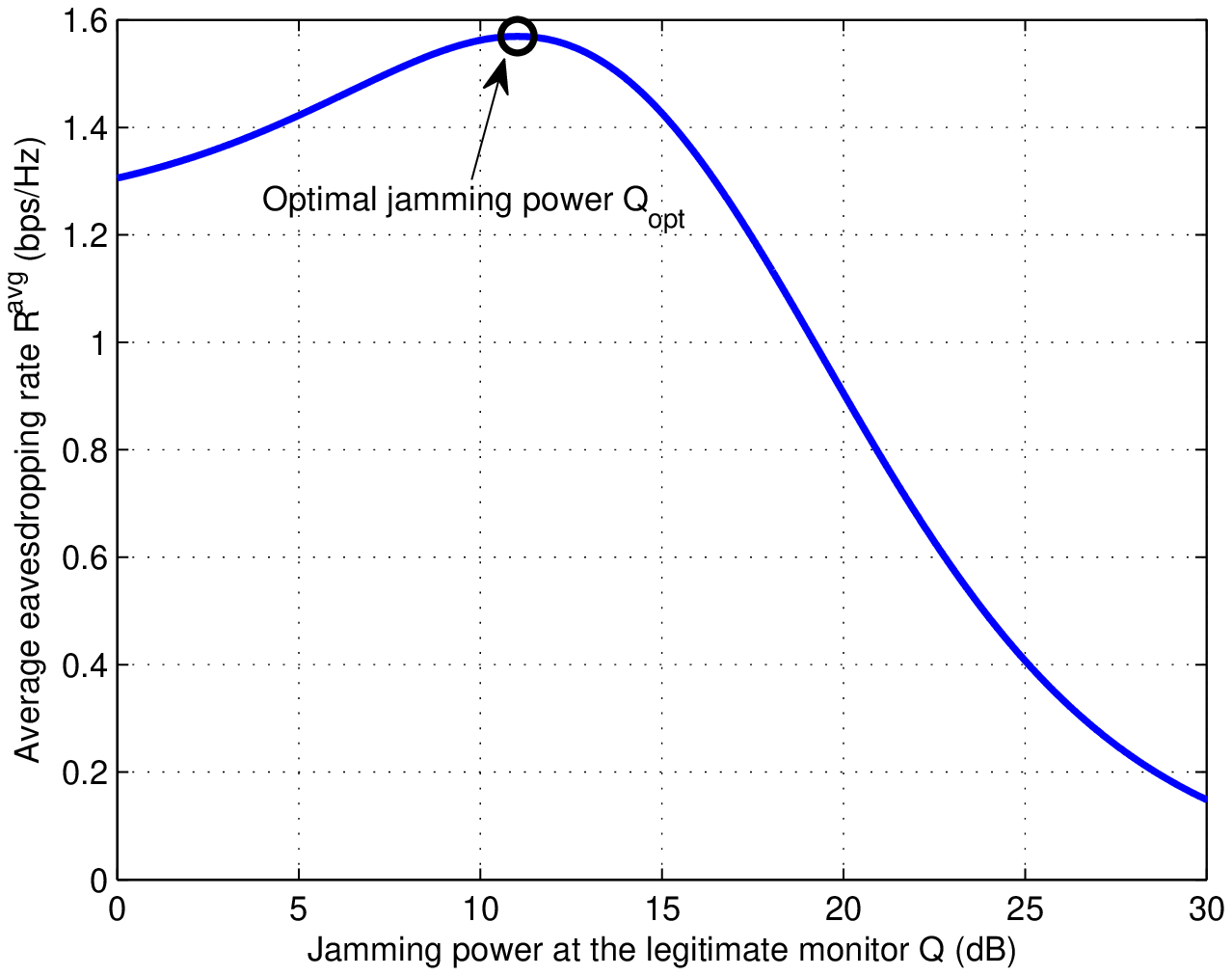}\label{fig:3}
\caption{The average eavesdropping rate $R^{\rm avg}$ versus the jamming power at the legitimate monitor $Q$.}\vspace{-0.5em}
\end{minipage}
\hfill
\begin{minipage}[t]{0.325\linewidth}
\centering
\includegraphics[width=\textwidth]{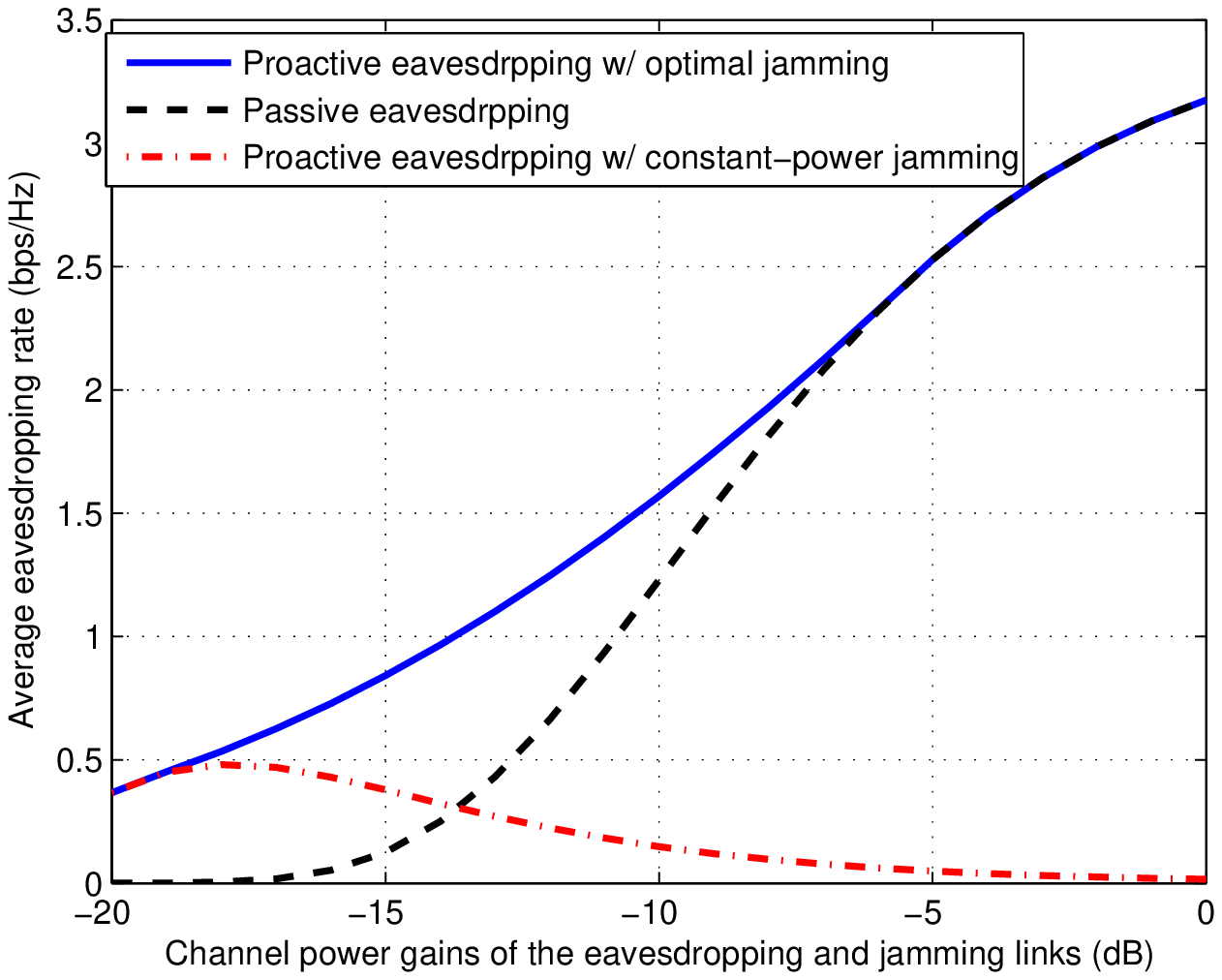}\label{fig:4}
\caption{The average eavesdropping rate $R^{\rm avg}$ versus the average channel power gains of the eavesdropping and jamming links.}\vspace{-0.5em}
\end{minipage}
\end{figure*}

\section{Optimal Jamming for Proactive Eavesdropping}

In this section, we provide the optimal jamming power solution for the legitimate monitor by solving problem (P1).

First, we derive the one-to-one relationship between $R$ and $Q$ and transform problem (P1) into a single-variable optimization problem over $R$. We have the following lemma.
\begin{lemma}\label{proposition:2}
It follows from (\ref{eqn:p1}) that
\begin{align}
p_0^{\rm out} = & 1-\frac{\lambda_2/((2^R - 1)Q)}{\lambda_0/P + \lambda_2/((2^R - 1)Q)}e^{-\lambda_0 \sigma_0^2 (2^R-1)/P}.\label{eqn:p0:all}
\end{align}
By combing this with (\ref{eqn:con1}), we have
\begin{align}\label{eqn:Q:opt}
Q = \psi(R) \triangleq \frac{P\lambda_2e^{-\lambda_0 \sigma_0^2 (2^R-1)/P} }{\lambda_0(2^R - 1)(1-\delta)} - \frac{\lambda_2P}{\lambda_0(2^R - 1)}.
\end{align}
\end{lemma}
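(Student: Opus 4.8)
The plan is to compute $p_0^{\rm out}$ in closed form by translating the outage event into a probability over the two independent fading gains $g_0$ and $g_2$, and then to invert the resulting expression under the constraint (\ref{eqn:con1}) to isolate $Q$. First I would rewrite the outage event in terms of the SINR: since $r_0 = \log_2(1+\gamma_0)$ with $\gamma_0 = g_0 P/(g_2 Q + \sigma_0^2)$, the event $r_0 < R$ is equivalent to $\gamma_0 < 2^R - 1$, i.e. $g_0 < (2^R-1)(g_2 Q + \sigma_0^2)/P$. Writing $\beta \triangleq 2^R - 1$ for brevity, this reduces $p_0^{\rm out}$ to $\mathbb{P}\big(g_0 < \beta(g_2 Q + \sigma_0^2)/P\big)$.

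The key step is to evaluate this probability by conditioning on $g_2$. For fixed $g_2 = x$, the inner probability is simply the CDF of $g_0\sim\mathrm{Exp}(\lambda_0)$ at the threshold $\beta(xQ+\sigma_0^2)/P$, namely $1 - e^{-\lambda_0 \beta(xQ+\sigma_0^2)/P}$. Averaging over the exponential density $\lambda_2 e^{-\lambda_2 x}$ of $g_2$ then gives
\begin{align}
p_0^{\rm out} = \int_0^\infty \left(1 - e^{-\lambda_0 \beta(xQ+\sigma_0^2)/P}\right) \lambda_2 e^{-\lambda_2 x}\, dx. \nonumber
\end{align}
The constant term integrates to $1$, while the remaining integrand is a single decaying exponential in $x$. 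Factoring out the $x$-independent term $e^{-\lambda_0 \sigma_0^2 \beta/P}$ and using $\int_0^\infty e^{-cx}\,dx = 1/c$ with $c = \lambda_0\beta Q/P + \lambda_2$ produces the ratio $\lambda_2/(\lambda_0\beta Q/P + \lambda_2)$, which after dividing numerator and denominator by $\beta Q$ is exactly the factor appearing in (\ref{eqn:p0:all}). This conditional integral is the only substantive computation, and it is elementary because both $g_0$ and $g_2$ are exponential; no special functions arise.

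Finally, I would impose $p_0^{\rm out} = \delta$ on the closed-form expression and solve for $Q$ by routine algebra. Setting the ratio term equal to $1-\delta$ and inverting yields $\lambda_0\beta Q/P + \lambda_2 = \lambda_2 e^{-\lambda_0 \sigma_0^2 \beta/P}/(1-\delta)$; isolating $Q$ and substituting $\beta = 2^R - 1$ then gives $Q = \psi(R)$ as in (\ref{eqn:Q:opt}). There is no genuine obstacle in this argument: it rests entirely on the independence and exponential form of the fading gains, which render the conditional integral tractable, and the only care required is to track the algebra so that the two terms of $\psi(R)$ emerge with the correct signs and factors.
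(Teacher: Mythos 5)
Your proposal is correct, and every step you describe goes through: the event rewrite $g_0 < (2^R-1)(g_2 Q + \sigma_0^2)/P$, the single exponential integral, and the final algebra all check out and reproduce (\ref{eqn:p0:all}) and (\ref{eqn:Q:opt}) exactly. Your route differs from the paper's in how the two exponentials are integrated out. The paper forms the auxiliary variable $Z = g_0 P - g_2(2^R-1)Q$, i.e.\ a difference $X_1 - X_2$ of two independent exponentials with rates $\tilde\lambda_1 = \lambda_0/P$ and $\tilde\lambda_2 = \lambda_2/((2^R-1)Q)$, derives the full two-branch PDF of $Z$ by convolution (one branch for $z>0$, one for $z\le 0$), and then integrates that PDF to get $\mathbb{P}(Z < \sigma_0^2(2^R-1))$. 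You instead condition on $g_2$, evaluate the conditional CDF of $g_0$ at the random threshold, and average over the density of $g_2$ --- a single integral via the tower property, with no intermediate density and no need for the $z \le 0$ branch (which the paper computes but never uses, since the threshold $\sigma_0^2(2^R-1)$ is nonnegative). Your version is the leaner of the two for this specific probability; the paper's version produces as a byproduct the general distribution of the difference of two exponentials, which is a reusable fact but costs two integrations instead of one. The concluding step --- setting the non-outage factor equal to $1-\delta$ and isolating $Q$ --- is identical in both arguments.
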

\begin{proof}
See Appendix \ref{app:B}.
\end{proof}
Here, the jamming power function $\psi(R)$ in (\ref{eqn:Q:opt}) is monotonically decreasing in $R$. This intuitively means that to reduce the suspicious communication rate $R$, higher jamming power $Q$ is required for the legitimate monitor. By using (\ref{eqn:Q:opt}) together with the constraint in (\ref{eqn:con2}), we have $0 \le \psi(R) \le Q_{\max}$, and accordingly
\begin{align}\label{eqn:R:opt}
\psi^{-1}(Q_{\max})\le R\le \psi^{-1}(0),
\end{align}
where $\psi^{-1}(\cdot)$ is the inverse function of $\psi(\cdot)$. Specifically, we have $\psi^{-1}(0) = \log_2\left(1 + \frac{-P\ln(1-\delta)}{\lambda_0\sigma_0^2}\right)$ and
\begin{align*}
\psi^{-1}(Q) = \log_2\left(1+\frac{P}{\sigma_0^2\lambda_0}\mathcal{W}\left( \frac{\sigma_0^2\lambda_2}{Q(1-\delta)}e^{  \frac{\sigma_0^2 \lambda_2}{Q}}\right) - \frac{\lambda_2P}{Q\lambda_0}\right)
\end{align*}
for any $Q > 0$, where $\mathcal{W}\left(x\right)$ denotes the Lambert W function of $x$ with $\mathcal{W}(x)e^{\mathcal{W}(x)} = x$ \cite{Lambert}. Furthermore, note that
\begin{align}
p^{\rm out}_1 & =  \mathbb{P}\left(g_1 < (2^R - 1) \frac{\sigma_0^2}{P}\right) = 1 - e^{-\lambda_1(2^R - 1) \frac{\sigma_0^2}{P}}
\end{align}
due to the fact that $g_1\sim\mathrm{Exp}(\lambda_1)$. As a result, problem (P1) is equivalently reformulated as the following single-variable optimization problem over $R$:
\begin{align}
{\rm (P2)}:\max_{R} ~& R(1-p^{\rm out}_1) = R e^{-\lambda_1(2^R - 1) \frac{\sigma_0^2}{P}}\nonumber\\
\mathrm{s.t.}~&(\ref{eqn:R:opt}).\nonumber
\end{align}

Next, we study problem (P2), for which we have the following lemma.
\begin{lemma}\label{proposition:1}
The objective function of (P2) is monotonically increasing over $[0, R^*]$, and monotonically decreasing over $(R^*,+\infty)$, where $R^*$ is the suspicious communication rate achieving the maximum average eavesdropping rate, given by
\begin{align}\label{eqn:R_star}
R^*= \frac{1}{\ln 2}\mathcal{W}\left(\frac{P}{\lambda_1\sigma_0^2}\right).
\end{align}
\end{lemma}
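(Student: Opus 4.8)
The plan is to treat the objective of (P2) as the one-dimensional function $f(R) = R\,e^{-\lambda_1(2^R-1)\sigma_0^2/P}$ on $R\ge 0$ and to establish its unimodality by a first-derivative sign analysis, after which the unique stationary point is identified in closed form through the Lambert W function. To lighten the notation I would set $a \triangleq \lambda_1\sigma_0^2/P > 0$, so that $f(R) = R\,e^{a}\,e^{-a2^R}$; since $e^{a}$ is a positive constant, it does not affect where $f$ increases or decreases.

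First I would differentiate. Using $\frac{d}{dR}2^R = 2^R\ln 2$, the product rule yields
\begin{align}
f'(R) = e^{-a(2^R-1)}\bigl(1 - a\,R\,2^R\ln 2\bigr).\nonumber
\end{align}
Because the exponential prefactor is strictly positive for every $R$, the sign of $f'(R)$ coincides with the sign of the bracketed term, which I would name $h(R) \triangleq 1 - a\,R\,2^R\ln 2$.

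The next step is to show that $h$ changes sign exactly once. I would note that $R\mapsto R\,2^R$ is strictly increasing on $[0,\infty)$, since its derivative $2^R(1+R\ln 2)$ is positive there; hence $h$ is strictly decreasing on $[0,\infty)$. As $h(0)=1>0$ and $h(R)\to-\infty$ when $R\to\infty$, there is a unique $R^*>0$ with $h(R^*)=0$, and consequently $f'(R)>0$ for $R<R^*$ and $f'(R)<0$ for $R>R^*$. This is precisely the claimed increase-then-decrease behaviour with maximizer $R^*$.

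Finally I would solve the stationarity condition $h(R^*)=0$, i.e. $a\,R^*\,2^{R^*}\ln 2 = 1$, in closed form. Substituting $t \triangleq R^*\ln 2$ so that $2^{R^*}=e^{t}$ converts it into $a\,t\,e^{t}=1$, that is $t\,e^{t}=1/a=P/(\lambda_1\sigma_0^2)$. By the defining property $\mathcal{W}(x)e^{\mathcal{W}(x)}=x$ of the Lambert W function, this gives $t=\mathcal{W}(P/(\lambda_1\sigma_0^2))$, and dividing by $\ln 2$ recovers $R^* = \frac{1}{\ln 2}\mathcal{W}(P/(\lambda_1\sigma_0^2))$ as stated. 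I do not expect a deep obstacle: the argument is elementary calculus, and the only genuine trick is recognizing the substitution $t=R^*\ln 2$ that recasts the stationarity condition into the canonical form $te^t=\text{const}$ so that the Lambert W function applies. The one point warranting care is verifying the strict monotonicity of $R\,2^R$ (equivalently of $h$) on the entire domain, since that is what upgrades the single critical point into a genuine unique global maximizer and rules out any additional sign changes.
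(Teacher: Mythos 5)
Your proof is correct and follows essentially the same route as the paper: a first-derivative sign analysis showing the derivative's bracketed factor is strictly decreasing with a single sign change, followed by solving the stationarity condition via the Lambert W function. The only difference is cosmetic—the paper first substitutes $x = 2^R - 1$ and differentiates in $x$, while you differentiate directly in $R$ and substitute $t = R\ln 2$ at the end; both reduce to the same equation $te^t = P/(\lambda_1\sigma_0^2)$.
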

\begin{proof}
See Appendix \ref{app:A}.
\end{proof}

From Lemma \ref{proposition:1}, we can easily obtain the optimal solutions to (P2) and (P1) in the following theorem, for which the proof is omitted for brevity.
\begin{theorem}\label{theorem1}
The optimal suspicious communication rate $R$ to (P2) and (P1) and the optimal jamming power $Q$ to (P1) are respectively given as
\begin{align}
R_{\rm opt} &= \max\left(\min\left(\psi^{-1}(0), R^*\right),\psi^{-1}(Q_{\max})\right),\\
Q_{\rm opt} &= \psi(R_{\rm opt})= \min\left(\max\left(0, \psi(R^*)\right),Q_{\max}\right).
\end{align}
\end{theorem}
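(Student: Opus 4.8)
The plan is to treat (P2) as the maximization of a one-dimensional unimodal objective over a closed interval, and then to translate the resulting optimal rate back into the optimal jamming power through the monotone map $\psi$. First I would invoke Lemma \ref{proposition:1}, which establishes that the objective $R e^{-\lambda_1(2^R - 1)\frac{\sigma_0^2}{P}}$ is strictly increasing on $[0,R^*]$ and strictly decreasing on $(R^*,+\infty)$; hence it is unimodal with a unique peak at the interior point $R^*$. The feasible set of (P2) is the interval $[\psi^{-1}(Q_{\max}),\psi^{-1}(0)]$ prescribed by (\ref{eqn:R:opt}), which is nonempty and nondegenerate since $\psi$ is monotonically decreasing and $Q_{\max}>0$, so that $\psi^{-1}(Q_{\max})<\psi^{-1}(0)$.

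Next I would read off the constrained maximizer of a unimodal function on an interval $[a,b]$ with $a=\psi^{-1}(Q_{\max})$ and $b=\psi^{-1}(0)$. There are three cases. If $R^*\in[a,b]$, the unconstrained peak is feasible and therefore optimal. If $R^*>b$, the objective is increasing throughout $[a,b]$, so the optimum is the right endpoint $b$. If $R^*<a$, the objective is decreasing throughout $[a,b]$, so the optimum is the left endpoint $a$. These three cases are summarized precisely by clipping $R^*$ to $[a,b]$, giving $R_{\rm opt}=\max(\min(R^*,\psi^{-1}(0)),\psi^{-1}(Q_{\max}))$, which is the claimed expression.

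Finally I would recover $Q_{\rm opt}$. By Lemma \ref{proposition:2}, the relation $Q=\psi(R)$ holds along the active constraint $p_0^{\rm out}=\delta$ in (\ref{eqn:con1}), so $Q_{\rm opt}=\psi(R_{\rm opt})$. Because $\psi$ is decreasing, it reverses order and hence swaps $\min$ and $\max$, namely $\psi(\max(x,y))=\min(\psi(x),\psi(y))$ and $\psi(\min(x,y))=\max(\psi(x),\psi(y))$. Pushing $\psi$ through the two nested clips in $R_{\rm opt}$ and using $\psi(\psi^{-1}(0))=0$ together with $\psi(\psi^{-1}(Q_{\max}))=Q_{\max}$ yields $Q_{\rm opt}=\min(\max(0,\psi(R^*)),Q_{\max})$, matching the theorem.

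The step that requires the most care — and the only part that is more than bookkeeping — is this last one: correctly propagating the order-reversing map $\psi$ through the two layers of $\min/\max$ so that the inner and outer bounds land on $0$ and $Q_{\max}$ respectively, rather than getting interchanged. Everything else follows directly from the unimodality established in Lemma \ref{proposition:1} and the monotonicity of $\psi$ recorded after Lemma \ref{proposition:2}.
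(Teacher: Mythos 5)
Your proposal is correct and follows exactly the route the paper intends: the paper omits the proof of Theorem \ref{theorem1} as an immediate consequence of Lemma \ref{proposition:1} (unimodality of the objective with peak at $R^*$) combined with the feasible interval (\ref{eqn:R:opt}), which is precisely your clipping argument. Your final step of pushing the order-reversing map $\psi$ through the nested $\min/\max$ to recover $Q_{\rm opt}=\min\left(\max\left(0,\psi(R^*)\right),Q_{\max}\right)$ is the right bookkeeping and fills in the detail the paper leaves implicit.
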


\begin{remark}\label{remark1}
The above optimal solution to (P1) is intuitively explained as follows. First, consider the case when $R^*$ is no smaller than the suspicious communication rate $\psi^{-1}(0)$ with zero jamming power, which may happen when the eavesdropping channel is similar to or stronger than the suspicious channel. In this case, we have $R_{\rm opt} = \psi^{-1}(0)$ and thus $Q_{\rm opt} = 0$, which means that the no jamming is required for the legitimate monitor to maximize its average eavesdropping rate. Next, consider $R^* < \psi^{-1}(0)$ when the eavesdropping channel is weaker than the suspicious channel. In this case, the legitimate monitor should use a positive jamming power $Q_{\rm opt} = \psi(R^*)$ to interfere with the suspicious receiver (as long as the jamming power does not exceed the maximum value $Q_{\max}$), such that the suspicious transmitter reduces its communication rate from $\psi^{-1}(0)$ to $R^*$ for maintaining the outage probability at the suspicious receiver to be $p_0^{\rm out} = \delta$, thus maximizing the average eavesdropping rate $R^{\rm avg}$.
\end{remark}

%\vspace{-1em}

\section{Numerical Results}

In this section, we provide numerical results to validate our proposed proactive eavesdropping via jamming approach. In the simulation, we set the average channel power gain of the suspicious link as $1/\lambda_0 = 1$, with the parameter of the exponentially distributed variable $g_0$ being $\lambda_0 = 1$. Here, the channel power gains are normalized over the receiver noise powers such that we can conveniently set the noise powers to be $\sigma_0^2 = \sigma_1^2 = 1$. Furthermore, we set the target outage probability at the suspicious link, the transmit power of the suspicious transmitter, and the maximum jamming power at the legitimate monitor to be $\delta=0.05$, $P = 20$ dB, and $Q_{\max} = 30$ dB, respectively.

Fig. 2 shows the eavesdropping non-outage probability $1-p_1^{\rm out}$ and the suspicious communication rate $R$ versus the jamming power at the legitimate monitor $Q$, where we set the average channel power gains of the eavesdropping and jamming links as $1/\lambda_1 = 1/\lambda_2 = 0.1$, with the parameters $\lambda_1 = \lambda_2 = 10$. This corresponds to the practical case when the legitimate monitor is far away from the suspicious transmitter and receiver. It is observed that as the jamming power $Q$ increases, the suspicious communication rate $R$ decreases, while the eavesdropping non-outage probability $1-p_1^{\rm out}$ increases. Fig. 3 shows the average eavesdropping rate $R^{\rm avg} = R(1-p_1^{\rm out})$ versus the jamming power at the legitimate monitor $Q$. It is observed that as $Q$ increases,  $R^{\rm avg}$ first increases and then decreases. This is consistent with Lemma \ref{proposition:1}. The maximum average eavesdropping rate is observed to be achieved at the optimal jamming power $Q_{\rm opt}$ obtained in Theorem \ref{theorem1}.

%\begin{figure}
%\centering
% \epsfxsize=1\linewidth
%    \includegraphics[width=7cm]{fig3_new.eps}
%\caption{The average eavesdropping rate $R^{\rm avg}$ versus the jamming power at the legitimate monitor $Q$.} \label{fig:3}\vspace{-1em}
%\end{figure}
%
%\begin{figure}
%\centering
% \epsfxsize=1\linewidth
%    \includegraphics[width=7cm]{fig4.eps}
%\caption{The average eavesdropping rate $R^{\rm avg}$ versus the average channel power gains of the eavesdropper and jammer links.} \label{fig:4}\vspace{-1em}
%\end{figure}

Fig. 4 shows the average eavesdropping rate $R^{\rm avg}$ versus the average channel power gains of the eavesdropping and jamming links, i.e., $1/\lambda_1 = 1/\lambda_2$. We consider the following two reference schemes for performance comparison.
\begin{itemize}
  \item {\it Conventional passive eavesdropping without jamming:} in this scheme, the legitimate monitor does not send any jamming signal, i.e., $Q = 0$.
  \item {\it Proactive eavesdropping with constant-power jamming:} in this scheme, the legitimate monitor uses the maximum jamming power, i.e., $Q = Q_{\max}$.
\end{itemize}
It is observed that as the average channel gains of the eavesdropping and jamming links increase, the average eavesdropping rates of the passive eavesdropping and the proactive eavesdropping with optimal jamming  increase, while that of the proactive eavesdropping with constant-power jamming first increases and then decreases. The proactive eavesdropping with optimal jamming is observed to outperform the two reference schemes. Specifically, when the channel gains are small (e.g., smaller than $-15$ dB), the average eavesdropping rate of passive eavesdropping is observed to be close to zero, since the legitimate monitor is too far from the suspicious transmitter and thus is difficult to eavesdrop successfully. In contrast, the proactive eavesdropping (with both optimal and constant-power jamming) is observed to achieve significant average eavesdropping rate gain in this case. When the channel gains become large (e.g., larger than $-5$ dB), the passive eavesdropping and the proactive eavesdropping with optimal jamming schemes are observed to achieve identical average eavesdropping rates. This is consistent with our observations in Remark \ref{remark1}.

%\vspace{-0.5em}

\section{Conclusion}%\vspace{-0.3em}

This letter proposes a new legitimate eavesdropping paradigm in wireless security, and presents a proactive eavesdropping via jamming approach, where the legitimate monitor jams the suspicious wireless communication link for improving the eavesdropping performance. By considering Rayleigh fading channels, we obtain the optimal jamming power in closed-form to maximize the average eavesdropping rate. Thanks to the jamming with optimized power control, our proposed proactive eavesdropping improves the average eavesdropping rate significantly over the conventional passive eavesdropping scheme without jamming, especially when the legitimate monitor is far from the suspicious transmitter and receiver. We believe that this work can provide new insights on wireless legitimate surveillance for the national security.
%\vspace{-0.5em}

\appendix

\subsection{Proof of Lemma \ref{proposition:2}}\label{app:B}%This is obtained by using the fact that if $X \sim {\rm Exp}(\lambda)$, then $kX \sim {\rm Exp}(\lambda/k)$.
From (\ref{eqn:p1}), it follows that $p^{\rm out}_0$ can be re-expressed as
\begin{align}
p^{\rm out}_0
& = \mathbb{P}\left({g_0P} - g_2(2^R - 1)Q < \sigma_0^2(2^R - 1) \right).\label{eqn:p0out}
\end{align}
First, we obtain the probability density function (PDF) of $Z={g_0P} - (2^R - 1)g_2Q = X_1-X_2$, where $X_1 \triangleq g_0P$ and $X_2 \triangleq g_2(2^R - 1)Q$. Note that $g_i \sim {\rm Exp}(\lambda_i), i\in\{0,2\}$. Accordingly, denote $\tilde\lambda_1 = \lambda_0/P$ and $\tilde\lambda_2 = \lambda_2/((2^R - 1)Q)$. Then we have $X_i \sim {\rm Exp}(\tilde\lambda_i), i\in\{1,2\}$. %, for which the PDF is expressed as
%\begin{align*}
%f_{X_i}(x) =& \left\{\begin{array}{ll}
%\tilde\lambda_i e^{-\tilde\lambda_ix}, & {\rm if}~x\ge 0,\\
%0, & {\rm otherwise}.
%\end{array}
%\right.
%\end{align*}
Therefore, the PDF of $Z=X_1-X_2$ is expressed as
\begin{align*}
f_Z(z) %=& \int_{-\infty}^{+\infty}f_{X_1}(z+x_2)f_{X_2}(x_2){\rm d} x_2 \\
=& \int_{0}^{+\infty}f_{X_2}(z+x_2)\tilde\lambda_2 e^{-\tilde\lambda_2x_2}{\rm d} x_2 \\
=& \left\{\begin{array}{ll}
\frac{\tilde\lambda_1\tilde\lambda_2}{\tilde\lambda_1+\tilde\lambda_2}e^{-\tilde\lambda_1z}, & {\rm if}~z > 0\\
\frac{\tilde\lambda_1\tilde\lambda_2}{\tilde\lambda_1+\tilde\lambda_2}e^{\tilde\lambda_2z}, & {\rm if}~z \le  0.
\end{array}
\right.
\end{align*}
%In the case when $z > 0$, we have
%\begin{align*}
%f_Z(z)
%=  \int_{0}^{+\infty}\tilde\lambda_1 e^{-\tilde\lambda_1(z+x_2)}\tilde\lambda_2 e^{-\tilde\lambda_2x_2}{\rm d} x_2
%= & \frac{\tilde\lambda_1\tilde\lambda_2}{\tilde\lambda_1+\tilde\lambda_2}e^{-\tilde\lambda_1z}.
%\end{align*}
%When $z \le 0$, we have
%\begin{align*}
%f_Z(z) = \int_{-z}^{+\infty}\tilde\lambda_1 e^{-\tilde\lambda_1(z+x_2)}\tilde\lambda_2 e^{-\tilde\lambda_2x_2}{\rm d} x_2
%= & \frac{\tilde\lambda_1\tilde\lambda_2}{\tilde\lambda_1+\tilde\lambda_2}e^{\tilde\lambda_2z}.
%\end{align*}
As a result, it follows that when $z \ge 0$,
\begin{align}
\mathbb{P}(Z<z)% = &\int_{-\infty}^{0} \frac{\tilde\lambda_1\tilde\lambda_2}{\tilde\lambda_1+\tilde\lambda_2}e^{\tilde\lambda_2z}
%{\rm d} z + \int_{0}^{z} \frac{\tilde\lambda_1\tilde\lambda_2}{\tilde\lambda_1+\tilde\lambda_2}e^{-\tilde\lambda_1z}
%{\rm d} z \nonumber\\
%=& \frac{\tilde\lambda_1}{\tilde\lambda_1+\tilde\lambda_2} + \frac{\tilde\lambda_2}{\tilde\lambda_1+\tilde\lambda_2} - \frac{\tilde\lambda_2}{\tilde\lambda_1+\tilde\lambda_2} e^{-\tilde\lambda_1 z} \nonumber\\
=& 1 - \frac{\tilde\lambda_2}{\tilde\lambda_1+\tilde\lambda_2} e^{-\tilde\lambda_1 z}.\label{eqn:f_z}
\end{align}
By combining (\ref{eqn:p0out}) and (\ref{eqn:f_z}), we have $p_0^{\rm out} = \mathbb{P}(Z<\sigma_0^2 (2^R-1))$ and thus (\ref{eqn:p0:all}) follows. Using (\ref{eqn:p0:all}) together with $p^{\rm out}_0 = \delta$ in (\ref{eqn:con1}), we have (\ref{eqn:Q:opt}). Therefore, this lemma is proved.

%\begin{align}
%p_0^{\rm out} = & \mathbb{P}(Z<\sigma_0^2 (2^R-1)) \nonumber\\
%= & 1-\frac{\lambda_2/((2^R - 1)Q)}{\lambda_0/P + \lambda_2/((2^R - 1)Q)}e^{-\lambda_0 \sigma_0^2 (2^R-1)/P}\label{eqn:p0:all}
%\end{align}

%\vspace{-0.5em}
\subsection{Proof of Lemma \ref{proposition:1}}\label{app:A}
%Since $g_1\sim\mathrm{Exp}(\lambda_1)$, it follows that
%\begin{align}
%p^{\rm out}_1 & =  \mathbb{P}\left(g_1 < (2^R - 1) \frac{\sigma_0^2}{P}\right) = 1 - e^{-\lambda_1(2^R - 1) \frac{\sigma_0^2}{P}}.
%\end{align}
%As a result, the objective function of (P2) can be re-expressed as
%\begin{align}
%R(1-p^{\rm out}_1) = h(R) \triangleq R e^{-\lambda_1(2^R - 1) \frac{\sigma_0^2}{P}}.
%\end{align}
By letting $x = 2^R-1\ge 0$ and accordingly $R = \log_2(1+x)$, we can rewrite the objective function of (P2) as
\begin{align}
\phi(x) = \log_2(1+x) e^{-\lambda_1x \frac{\sigma_0^2}{P}}.
\end{align}
As a result, proving this lemma is equivalent to showing that $\phi(x)$ is monotonically increasing over $[0, x^*]$, and monotonically decreasing over $(x^*,+\infty)$, where $x^* = 2^{R^*}-1$.

Taking the first-order derivative of $\phi(x)$, we have
\begin{align*}
\phi'(x) %&\frac{1}{\ln 2\cdot(1+x)}e^{-\lambda_1x \frac{\sigma_0^2}{P}} - \lambda_1 \frac{\sigma_0^2}{P}\log_2(1+x)e^{-\lambda_1x \frac{\sigma_0^2}{P}} \nonumber\\
= &e^{-\frac{\lambda_1\sigma_0^2x}{P}}\left(\frac{1}{\ln 2\cdot(1+x)} -  \frac{\lambda_1\sigma_0^2}{P}\log_2(1+x)\right).%\label{eqn:der}
\end{align*}
By setting $\phi'(x) = 0$, we have $x^* = 2^{R^*}-1$ with $R^*$ given in (\ref{eqn:R_star}). Furthermore, it is easy to show that $\phi'(x) > 0$ when $x \in [0, x^*]$ and $\phi'(x) < 0$ when $x \in (x^*,\infty)$. Therefore, this lemma is proved.%\vspace{-1em}
%
%
%\begin{align}
%%&\frac{1}{\ln 2\cdot(1+x)} - \lambda_1 \frac{\sigma_0^2}{P}\log_2(1+x) = 0 \\
%%\Leftrightarrow ~&
%(1+x)\log_2(1+x) = \frac{P}{\ln 2\cdot \lambda_1\sigma_0^2}.\label{eqn:11}
%\end{align}
%Substituting $1+x = 2^R$ in (\ref{eqn:11}), we have
%\begin{align}
%&\ln 2\cdot R 2^R = \ln 2\cdot Re^{\ln 2\cdot R} = \frac{P}{\lambda_1\sigma_0^2} \nonumber\\ \Leftrightarrow ~&R^* = \frac{1}{\ln 2}\mathcal{W}\left(\frac{P}{\lambda_1\sigma_0^2}\right).
%\end{align}
%As a result, $\phi'(x) = 0$ is attained at the point of $x^* = 2^{R^*}-1$. Now, to complete the proof, it remains to show that $\phi'(x) > 0$ when $x \in [0, x^*]$ and $\phi'(x) < 0$ when $x \in (x^*,\infty)$. This can be easily verified by noting that $\frac{1}{\ln 2\cdot(1+x)} -  \frac{\lambda_1\sigma_0^2}{P}\log_2(1+x)$ in (\ref{eqn:der}) is monotonically decreasing over $x \ge 0$. Therefore, this lemma is proved.\vspace{-1em}

\end{document}